\documentclass[11pt]{article}

\usepackage{enumitem}
\usepackage{epsfig}
\usepackage[mathscr]{euscript}
\usepackage{csquotes}

\usepackage[linesnumbered, ruled,noend,noline]{algorithm2e}

\usepackage{float}
\usepackage{graphicx}
\usepackage{ stmaryrd }
\usepackage{multirow}
\usepackage{gensymb}
\usepackage{array}
\usepackage{listings}
\usepackage{amsthm}

\renewcommand\thesection       {\arabic{section}}

\usepackage{amssymb}
\usepackage{graphicx,amsmath,calc}

\newtheorem{theorem}{Theorem}[section]

\newtheorem{corollary}[theorem]{Corollary}

\numberwithin{equation}{section}

\title{Information Accessibility Limits in Structured NP Search}
\author{Jing-Yuan Wei\thanks{Zhejiang Yi-Neng Grid-Storage Energy Co. Ltd. 
5277 Ouyang Rd, Haijing District, Wenzhou, Zhejiang, China. Email: weijingyuan@gmail.com}
}
\date{May 2026} 

\begin{document}
\maketitle

\begin{abstract}
We study the problem of locating violating principal minors in matrix
families lying near the boundary of P-matrices. Rather than viewing
this search problem purely through computational complexity, we
analyze it from an information-accessibility perspective.

We show that, despite strong underlying algebraic structure, the
location of a violating subset may remain difficult to infer through
local queries. In the sparse-violation regime, local observations
typically provide only weak eliminative power, and polynomially many
queries accumulate only vanishing mutual information about the hidden
witness under the induced oracle model.

Using mutual information and Fano's inequality, we characterize the
resulting limitation on information acquisition. The analysis
highlights a conceptual distinction between structure and
accessibility: a problem may possess rich underlying structure while
the information required to identify a hidden witness remains weakly
inferable from observable responses.
\end{abstract}

\noindent\textbf{Keywords:}
P-matrix; structured search; information accessibility;
mutual information; statistical indistinguishability;
weak inferability; Fano's inequality; interaction models.

\section{Introduction}

The study of P-matrices - matrices whose principal minors are all
positive - plays a central role in optimization, complementarity
problems, and numerical analysis. A fundamental computational question
is to determine whether a given matrix is a P-matrix.
Coxson~\cite{Coxson1994} showed that this decision problem is
co-NP-complete, suggesting that the associated search problem of
locating a violating principal minor may also be computationally
difficult.

In this work, we study this search problem from an
information-accessibility perspective. While the decision problem asks
whether a violation exists, the search problem requires identifying a
violating subset among an exponentially large family of principal
submatrices. Our goal is not to establish unconditional hardness
results, but rather to investigate how representation and interaction
govern access to hidden witness information.

Our central observation is that structure alone does not necessarily
yield informative local observations. Even when a matrix exhibits
strong algebraic regularity, the information required to isolate a
violating principal minor may remain distributed across the matrix
structure and only weakly inferable through local queries. Under such
interaction constraints, search may become difficult because the rate
of useful information acquisition remains low despite the presence of
substantial structure.

To investigate this phenomenon, we introduce a structured family of
rank-one perturbations of P-matrices:
\begin{equation}
A(u,v)=M+uv^{\top},
\label{eq:As}
\end{equation}
where \(M\in\mathbb{R}^{n\times n}\) is a P-matrix and
\(u,v\in\mathbb{R}^{n}\). In an appropriate parameter regime, all
principal minors remain positive, while small perturbations can
introduce sparse violations without destroying the underlying
structure.

We analyze the problem under restricted interaction models in which
algorithms access the matrix only through local queries. This
abstraction does not artificially remove structure from the problem,
but instead isolates how the chosen interaction model governs access
to hidden witness information. Under a suitable distribution over hidden violating subsets,
polynomially many queries accumulate only limited mutual information
about the location of the hidden violation. Consequently,
even highly structured instances may require many interactions before
sufficient information about the violating subset becomes inferable.

More broadly, our analysis suggests an information-accessibility
framework for search problems, in which complexity depends not only on
the existence of structure, but also on how effectively that
structure can be exploited through interaction. From this viewpoint,
computational difficulty may arise when the information required to
specify a solution remains only weakly inferable from observable
responses.

Although the present work is motivated more broadly by combinatorial
search problems, we focus here on a structured matrix family in order
to isolate the role of information accessibility. To this end, we
adopt a uniform prior over the hidden witness, motivated by the
absence of strong local inferability among principal minors in general
matrix families (see \ref{app:conditional-minors}). This provides a
clean baseline in which information about the hidden witness must be
acquired through interaction.

This perspective aligns with recent information-theoretic approaches
to search, such as the psocid framework~\cite{Wei2026}, which studies
search limitations under restricted access in structureless settings.
The present work extends this viewpoint to structured problems,
suggesting that strong underlying structure may still produce only
weakly informative local observations under certain interaction
models.

Section~\ref{sc:preliminary} introduces the preliminaries.
Section~\ref{sc:lowerbound} presents the information-theoretic
analysis. Section~\ref{sc:accessibility} interprets the results
through statistical indistinguishability.
Section~\ref{sc:concluding} concludes the paper.

\section{Preliminaries}
\label{sc:preliminary}

Let \(M \in \mathbb{R}^{n \times n}\). For any subset
\(\alpha \subseteq [n]\), let \(M_\alpha\) denote the principal
submatrix indexed by \(\alpha\). A matrix \(M\) is called a P-matrix if
\[
\det(M_\alpha) > 0
\quad
\text{for all }
\alpha \subseteq [n],
\ \alpha \neq \emptyset.
\]

We consider the following search problem.

\paragraph{Violation Search Problem.}
For a P-matrix \(M\), let \(A(u,v)\) be defined
as in~\eqref{eq:As}, where \(u,v \in \mathbb{R}^{n}\).
The problem is to find a subset
\(\alpha \subseteq [n]\) such that
\[
\det(A(u,v)_\alpha)\le 0,
\]
if such a subset exists.

This is the search counterpart of the co-NP-complete problem of
recognizing P-matrices~\cite{Coxson1994}.

We are particularly interested in regimes in which violations of the
P-matrix property are sparse relative to the total number of principal
minors. Let \(A=A(u,v)\) for simplicity and 
\begin{equation} \label{eq:vset}
\mathcal{V}(A)
:=
\{
\alpha \subseteq [n]
:
\det(A_\alpha)\le 0
\}
\end{equation}
denote the set of violating principal minors.

We consider instances satisfying
\[
|\mathcal{V}(A)| \ll 2^n,
\]
so that only a small fraction of principal minors violate the
P-matrix condition.

A particularly informative regime is the
\emph{single-violation regime}, in which
\[
|\mathcal{V}(A)| = 1.
\]
In this case, there exists a unique subset \(w^\star\) such that
\[
\det(A_{w^\star}) \le 0,
\]
while all other principal minors remain positive.

Under this regime, the search problem reduces to locating a single
hidden subset \(w^\star\) among \(2^n-1\) candidates. Although the
matrix retains strong global algebraic structure, the violating witness
remains sparsely embedded within an exponentially large family of
principal minors.

\paragraph{Remark.}
\ref{app:matrixofsix} provides an explicit \(6\times6\)
example satisfying
\(
|\mathcal{V}(A)| = 1,
\)
while \ref{app:construction} gives a construction achieving
the single-violation regime for general P-matrices \(M\).

From the information-accessibility perspective, this regime is
particularly interesting because the global P-matrix structure remains
largely intact while the violation is weakly locally inferable. As a
result, local observations may reveal only limited information about
the location of the hidden violating subset.

\section{Information Accessibility and Search Complexity}
\label{sc:lowerbound}

We formalize our analysis through an information-theoretic perspective
on search problems, which we refer to as the
\emph{Information-Accessibility Framework}.

\paragraph{Information-Accessibility Framework.}
A search problem is specified by a hidden variable \(W\) representing
the solution, together with an interaction model that determines how an
algorithm acquires information about \(W\). The behavior of the search
is governed by two quantities:
\begin{itemize}
\item the \emph{information requirement}, measured by the entropy
\(H(W)\);
\item the \emph{information acquisition rate}, determined by the
information revealed through the interaction model.
\end{itemize}
The central question is whether the accessible information suffices to
identify \(W\).

\subsection{Instantiation: P-Matrix Violation Search}

We instantiate this framework in the setting of P-matrix violation
search. For notational convenience, we write \(N\) in place of \(n\)
in the remainder of the paper.

Let \(A=A(u,v)\) be drawn from the perturbation model defined
in~\eqref{eq:As}, and let \(\mathcal{V}(A)\) denote the set of
violating subsets defined in~\eqref{eq:vset}.
In the single-violation regime,
\(
|\mathcal{V}(A)|=1,
\)
and we denote the unique violating subset by
\(W\subseteq [N]\).

Assume that the hidden violating subset
\(W\in\mathcal{V}(A)\) is uniformly distributed over the admissible
nonempty subsets of \([N]\). The rationale for this non-informative
prior is discussed in \ref{app:conditional-minors}, which analyzes
weak conditional inferability between principal minors in general
matrix families. Hence,
\[
H(W)=\log_2(2^N-1)=\Theta(N),
\]
where entropy is measured in bits.

Although \(A\) is algebraically structured, the condition defining
\(\mathcal{V}(A)\) depends on coordinated global interactions among its
entries, and the location of \(W\) is not directly revealed by local
properties of \(A\).

\paragraph{Induced oracle model.}
We consider an induced oracle model in which the algorithm interacts
with the instance through a sequence of local queries. This interaction
may be viewed as a communication channel through which information
about \(W\) is gradually revealed.

To formalize information accessibility, the algorithm issues queries of
the form: for a subset \(\alpha\subseteq[N]\), observe
\[
Y:=
[\det(A_\alpha)\le 0]
\in\{0,1\}.
\]

The algorithm may adaptively select queries based on past
observations. Formally, at step \(t\), the query \(\alpha_t\) is a
function of the transcript
\(
T_{t-1}=(Y_1,\ldots,Y_{t-1}),
\)
and the response is
\[
Y_t=
[\det(A_{\alpha_t})\le 0].
\]

In the single-violation regime, this oracle satisfies
\[
Y=1
\quad\Longleftrightarrow\quad
\alpha=W,
\]
so the interaction reduces to equality testing.

We emphasize that the algorithm may have direct access to the entries
of \(A\) and may compute principal minors explicitly. The induced
oracle model abstracts this process by recording only the resulting
binary responses associated with queried principal minors, reflecting
their sequential evaluation one by one.

Let \(w^\star\) denote the realization of \(W\). We allow
\(p(N)\) parallel queries per round, where \(p(N)\) is polynomially
bounded, and let \(T\) denote the random round of the first successful
query. In the single-violation regime under the uniform prior, the
determinant-sign query behaves as an equality probe for \(W\). Hence,
by the psocid framework~\cite{Wei2026}, the expected first-hit time
satisfies
\[
\mathbb{E}[T]
=
\Omega\!\left(\frac{2^N}{p(N)}\right).
\]

\paragraph{Remark (Uniform prior and Yao's principle).}
The uniform prior over \(W\) represents maximal uncertainty about the
location of the violating subset and yields
\[
H(W)=\Theta(N).
\]
Within the induced interaction model, Yao's minimax
principle~\cite{Yao1977} provides a standard connection between
distributional analysis and the behavior of randomized algorithms
under the same access assumptions. In particular, hard distributions
may be used to study limitations on information acquisition for the
worst-case behavior of randomized algorithms under the induced
interaction model.

\subsection{Information-Acquisition Limitation}

The following theorem, adapted from the psocid
framework~\cite{Wei2026}, characterizes the limitation on the rate at
which information about \(W\) can be acquired.

\begin{theorem}[Information-acquisition limitation]
\label{thm:capacity}
Under the induced oracle model and a uniform prior over the nonempty
subsets \(W\subseteq[N]\), polynomially many queries accumulate only
vanishing mutual information about the hidden violating subset.

More precisely, for any possibly adaptive randomized algorithm making
at most polynomially many queries, let \(\mathcal{F}\) denote the
resulting random transcript. Then the mutual information between the
hidden violating subset \(W\) and the transcript \(\mathcal{F}\)
satisfies
\[
I(W;\mathcal{F})=o(1).
\]
\end{theorem}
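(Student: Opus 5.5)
The plan is to bound the transcript information directly through the chain rule, using the fact established above that in the single-violation regime each query reduces to an equality probe $Y_t=[\alpha_t=W]$. Write $K=K(N)$ for the total number of queries, which is polynomial in $N$; parallel rounds of $p(N)$ queries are simply serialized. Any internal randomness $R$ of the algorithm is independent of $W$, so we condition on it and work with a deterministic adaptive algorithm; since $I(W;\mathcal{F})\le I(W;\mathcal{F},R)=I(W;\mathcal{F}\mid R)$, this loses nothing.

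Once $R$ is fixed, the transcript is a deterministic function of the answer sequence $(Y_1,\dots,Y_K)$, because each query $\alpha_t$ is a function of $T_{t-1}$. Hence
\[
I(W;\mathcal{F}) \le H(Y_1,\dots,Y_K).
\]
The key structural observation is that until the first hit every answer is $0$. A run with a deterministic adaptive strategy therefore follows a single fixed path of queries $\alpha_1,\dots,\alpha_K$, and the entire transcript is determined by the index $\tau\in\{1,\dots,K,\infty\}$ of the first success. (Queries after a hit can be ignored, or all subsequent answers are $0$ because $W$ is unique.) So $H(\mathcal{F})\le H(\tau)$.

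Under the uniform prior on $2^N-1$ subsets, distinct queried sets give
\[
\Pr[\tau=t]\le q:=\frac{1}{2^N-1}
\]
for each $t$, and $\Pr[\tau=\infty]\ge 1-Kq$. A direct computation then yields
\[
H(\tau)\le Kq\log_2\frac1q + h_2(Kq),
\]
where $h_2$ is the binary entropy. With $K=\mathrm{poly}(N)$ and $q=2^{-N+o(1)}$, the first term is $O(\mathrm{poly}(N)\,N\,2^{-N})$. The second is $O\bigl(Kq\log(1/(Kq))\bigr)$, also exponentially small. Thus $I(W;\mathcal{F})=O(\mathrm{poly}(N)N2^{-N})=o(1)$, and averaging over $R$ preserves the bound.

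The main obstacle is justifying the reduction to an equality oracle, that is, ensuring the algorithm gains no side information about $W$ beyond the binary answers. The paper notes that the algorithm may see the entries of $A$, yet the statement concerns only the transcript $\mathcal{F}$ of oracle responses. I will therefore state explicitly that $\mathcal{F}$ consists of the queries and answers, with any extra randomness independent of $W$, so that the deterministic-path argument applies. Repeated queries need care, but they only decrease the number of distinct probes and so only strengthen the bound.
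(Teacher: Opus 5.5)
Your argument is correct and is essentially the paper's: both bound $I(W;\mathcal{F})$ by the entropy of the response transcript and use that a hit occurs with probability at most $\mathrm{poly}(N)/(2^N-1)$, which gives $O(\mathrm{poly}(N)\,N\,2^{-N})$. The paper does the bookkeeping with the chain rule and per-query binary entropies $h_2(p_k)=O(N/2^N)$, whereas you compress the transcript to the first-hit index $\tau$, which also makes the handling of internal randomness explicit (one small slip: after a hit the answers need not all be $0$, since the algorithm may re-query $W$, but they are still determined by $\tau$, so $H(\mathcal{F}\mid R)\le H(\tau\mid R)$ still holds).
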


\begin{proof}[Proof sketch]
Consider any adaptive sequence of queries. We allow
\(p(N)\) parallel queries per round, where \(p(N)\) is polynomially
bounded. Let
\(
Y_t=(Y_{t,1},\dots,Y_{t,m})
\)
denote the vector of responses in round \(t\), where
\(m\le p(N)\).

All query outcomes are flattened into a single sequence of
\(q=\mathrm{poly}(N)\) scalar responses:
\begin{equation}
\label{eq:fq}
\mathcal{F}_q=(y_1,\dots,y_q),
\end{equation}
where the variables \(y_k\) enumerate the coordinates
\(Y_{t,j}\).

In the single-violation regime, each query \(\alpha\) returns
\[
Y=[\alpha=W].
\]

Under the uniform prior over the nonempty subsets of \([N]\), and
conditioned on any transcript of \(k-1\) failed queries, the posterior
distribution of \(W\) remains uniform over the remaining candidates.
Hence, for the \(k\)-th query,
\[
\Pr(y_k=1\mid y_{<k})
\le
\frac{1}{2^N-1-(k-1)}.
\]

Since \(k\le q=\mathrm{poly}(N)\), this probability remains
exponentially small. Thus each observation is Bernoulli with parameter
\begin{equation} \label{eq:pk}
p_k
\le
\frac{1}{2^N-O(q)},
\end{equation}
and its binary entropy satisfies
\[
H(y_k\mid y_{<k})
=
-p_k\log_2 p_k
-(1-p_k)\log_2(1-p_k)
=
O\!\left(
\frac{N}{2^N}
\right).
\]

By the chain rule,
\[
I(W;\mathcal{F}_q)
=
\sum_{k=1}^{q}
I(W;y_k\mid y_{<k})
\le
\sum_{k=1}^{q}
H(y_k\mid y_{<k})
=
O\!\left(
\frac{qN}{2^N}
\right)
=
o(1).
\]

Thus polynomially many queries accumulate only vanishing information
about \(W\) within the induced oracle model.
\end{proof}

\begin{corollary}[Limited recovery under polynomial interaction]
By Fano's inequality, any algorithm making polynomially many queries
within the induced oracle model does not accumulate sufficient mutual
information to identify \(W\) with constant success probability.
\end{corollary}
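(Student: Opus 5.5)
The plan is to combine Fano's inequality with the bound $I(W;\mathcal{F})=o(1)$ from Theorem~\ref{thm:capacity}. Fix any (possibly adaptive, randomized) algorithm making $q=\mathrm{poly}(N)$ queries. Let $\hat W=g(\mathcal{F},R)$ be its output, where $R$ is the algorithm's internal randomness, chosen independently of $W$. Then $W\to(\mathcal{F},R)\to\hat W$ is a Markov chain.

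First I will check that the randomness costs nothing: $I(W;\mathcal{F},R)=I(W;R)+I(W;\mathcal{F}\mid R)=I(W;\mathcal{F}\mid R)$. For each fixed $R=r$, the algorithm is deterministic and adaptive, so the chain-rule argument in the proof of Theorem~\ref{thm:capacity} gives $I(W;\mathcal{F}\mid R=r)=O(qN/2^N)$ uniformly in $r$. Averaging over $r$ yields $I(W;\mathcal{F},R)=o(1)$.

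Next I apply Fano's inequality with $|\mathcal{W}|=2^N-1$ and $P_e=\Pr(\hat W\neq W)$:
\[
H(W\mid \mathcal{F},R)\le h_2(P_e)+P_e\log_2(2^N-2)\le 1+P_e\,N .
\]
Since $H(W\mid\mathcal{F},R)=H(W)-I(W;\mathcal{F},R)=\log_2(2^N-1)-o(1)$, this gives
\[
P_e\ \ge\ \frac{\log_2(2^N-1)-o(1)-1}{N}\ =\ 1-O(1/N).
\]
So the success probability is $O(1/N)$, which tends to zero and in particular is not bounded below by any constant. This is the statement. If the output is allowed to be one of the queried sets, the same bound applies, because the final guess is still a function of $(\mathcal{F},R)$.

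The crude Fano bound only gives success $O(1/N)$. I would also note the sharper direct count: the posterior given the transcript is uniform on at least $2^N-1-q$ candidates, so the success probability is at most $(q+1)/(2^N-1-q)$. This is exponentially small and consistent with the claim.

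The main obstacle is the formal handling of adaptivity and randomization in the $o(1)$ bound. This is exactly the conditioning on $R$ above, and it requires the per-step bound $H(y_k\mid y_{<k},R=r)=O(N/2^N)$ to hold for each deterministic strategy. That per-step bound in turn relies on the posterior staying uniform after failed queries, and on the observation that any successful query ends the uncertainty. Once this is in place, the Fano step is routine.
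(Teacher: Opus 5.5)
Your proposal is correct and follows the paper's route: the corollary comes from feeding the bound $I(W;\mathcal{F})=o(1)$ of Theorem~\ref{thm:capacity} into Fano's inequality. Your conditioning on the internal randomness $R$, the explicit estimate $\Pr(\hat W=W)=O(1/N)$, and the direct count $(q+1)/(2^N-1-q)$ (which mirrors the union-bound argument in Section~\ref{sc:accessibility}) make precise what the paper leaves implicit.
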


\paragraph{Interpretation.}
Theorem~\ref{thm:capacity} suggests that, within the induced oracle
model, the primary limitation arises from restricted information
acquisition rather than computational effort itself. Even under
unbounded computation, polynomially many interactions reveal only
vanishing information about the hidden witness. Thus, strong
algebraic structure does not necessarily translate into informative
local observations under restricted interaction models.

\paragraph{Intrinsic information limitation.}
One might attribute the difficulty of detecting a negative principal
minor in the near-boundary regime to numerical precision. However, our
analysis assumes exact arithmetic and abstracts away finite-precision
effects.

The rank-one construction clarifies that the difficulty lies not in
creating a violation - a small perturbation suffices - but in acquiring the information needed to locate such
a subset.

The information-accessibility barrier is therefore intrinsic: even under
idealized computation, recovery is limited by the information revealed
through queries rather than by computational power or numerical
precision.

\section{Statistical Indistinguishability and Information Accessibility}
\label{sc:accessibility}

The information-accessibility barrier can be interpreted through the
statistical behavior of the response transcript.

We work in the single-violation regime, where each instance has a
unique violating subset \(W\). For any possibly adaptive randomized
algorithm making
\(
q=\mathrm{poly}(N)
\)
queries, let
\[
\mathcal{F}_q := (y_1,\dots,y_q)
\]
denote the resulting response transcript defined in~\eqref{eq:fq}.

Under the uniform prior over the nonempty subsets of \([N]\), by
\eqref{eq:pk} and the union bound,
\[
\Pr(\exists k \le q : y_k = 1)
\le
\sum_{k=1}^q p_k
\le
\frac{q}{2^N-O(q)}
=
o(1).
\]
Hence, with probability \(1-o(1)\),
\[
\mathcal{F}_q=(0,0,\dots,0).
\]

Consequently, for any two distinct violating subsets
\(w\neq w'\), the corresponding transcript distributions
\[
P(\mathcal{F}_q\mid W=w)
\qquad\text{and}\qquad
P(\mathcal{F}_q\mid W=w')
\]
both assign probability \(1-o(1)\) to the same transcript
\((0,\dots,0)\). Thus, the transcript distributions induced by
different hidden witnesses become asymptotically indistinguishable
under polynomially many interactions.

Equivalently, the transcript carries only vanishing information about
the identity of the hidden violating subset:
\[
I(W;\mathcal{F}_q)=o(1).
\]

\medskip
\noindent\textbf{Interpretation.}
Under the induced interaction model, different candidate witnesses
produce almost identical observable behavior except on exponentially
rare events. Consequently, the observable signal distinguishing
different hidden violating subsets is extremely weak.

In sparse near-boundary regimes, small perturbations may switch the
matrix between feasibility and violation while affecting observable
responses only slightly. As a result, different candidate witnesses
may induce highly similar transcript distributions under restricted
interaction.

This suggests that sparse near-boundary regimes may naturally exhibit
weak inferability of hidden witness information from observable
responses.

\medskip
\noindent\textbf{Discussion.}
The present analysis applies specifically to the induced oracle model
studied in this work. Nevertheless, the sparse near-boundary regime
considered here may reflect a broader phenomenon in structured search
problems: the existence of a witness may be detectable while the
information required to identify a specific witness remains only
weakly inferable from observable responses.

In such settings, individual observations may provide only weak
eliminative or directional information, leading to statistical
indistinguishability among candidate witnesses even in the presence
of substantial underlying structure.

From this viewpoint, sparse near-boundary regimes may provide a useful
conceptual setting for studying how weak inferability and restricted
interaction jointly limit information acquisition in structured search
problems.

\section{Concluding Remarks}
\label{sc:concluding}

We study P-matrix violation search from an
information-accessibility perspective. The analysis suggests that even
highly structured matrix families may exhibit weak inferability of
hidden witnesses under restricted interaction models. In the
sparse-violation regime, local observations provide only limited
eliminative power, so identifying the violating subset requires
aggregating information across a large family of candidates.

More broadly, the present work highlights a conceptual distinction
between structure and accessibility. A problem may possess rich
internal structure while the information required to identify a hidden
witness remains difficult to extract through the available
interaction model. From this viewpoint, computational difficulty may
arise not from the absence of structure itself, but from limitations
on information acquisition through interaction.

A key assumption of the present analysis is the use of a simplified
interaction model and a non-informative prior, motivated by the weak
conditional inferability observed among principal minors in general
matrix families. In other NP search problems, however, such as 3-SAT,
clause overlap and adaptive inference introduce additional
dependencies that may substantially improve local inferability.
Algorithms such as PPSZ~\cite{PPSZ1998} demonstrate that structural
features can be exploited to improve search performance through local
propagation and inference.

This local inferability advantage may involve a complementary
tradeoff. Representational-expansion effects studied
in~\cite{Wei2026-1} suggest that improvements in local inferability
may arise at the cost of enlarging the representation through
auxiliary variables and consistency structures.

The present work suggests that information accessibility
may provide a useful complementary viewpoint for structured search
problems. The analysis indicates that substantial underlying structure
does not necessarily imply strong local inferability of hidden witness
information under restricted interaction models. Understanding how
interaction constraints shape information acquisition and witness
recoverability may therefore provide a useful direction for future
research.

\appendix
\renewcommand{\thesection}{Appendix~\Alph{section}}

\section{Conditional Accessibility Between Principal Minors}
\label{app:conditional-minors}

We briefly explain why the uniform prior used for the hidden violating
subset is a natural non-informative model in the absence of additional
structural assumptions.

The key point is that a uniform prior over the location of the hidden
violating subset does not require probabilistic independence among all
principal minors. Rather, it reflects the absence of strong local
inferability: observations of one principal minor generally do not
provide sufficient information to strongly privilege or eliminate
other candidate subsets.

Consider two principal minors whose index sets overlap. After a
suitable permutation of rows and columns, a principal submatrix can be
written in block form as
\[
A=
\begin{pmatrix}
C & D\\
E & F
\end{pmatrix},
\]
where \(F\) corresponds to the common part shared with another minor.
When \(F\) is nonsingular, the Schur complement formula gives
\[
\det(A)
=
\det(F)\,
\det(C-DF^{-1}E).
\]

Thus, even if the sign of \(\det(F)\) is known, the sign of
\(\det(A)\) additionally depends on the Schur-complement factor
\[
\det(C-DF^{-1}E).
\]
This factor involves interactions between the overlapping and
non-overlapping coordinates and is not determined by the sign of
\(\det(F)\) alone. Consequently, knowing that one principal minor is
positive does not, in general, determine whether another overlapping
principal minor is positive or nonpositive, let alone the signs of
more distant minors. Thus, in general matrix families, the sign of one queried principal
minor typically does not provide sufficient information to eliminate
other unqueried minors without further computation.

Importantly, the uniform prior concerns the location of the hidden
violating subset rather than probabilistic independence among all
principal minors. Even when the principal minors themselves are
strongly correlated, weak local inferability means that no candidate
subset is strongly privileged before interaction.

An analogy may help clarify this distinction. Suppose there are
\(
K=2^N-1
\)
iron balls of identical size but different weights, and exactly one
ball weighs less than one gram. The ball weights may be highly
correlated and therefore need not be mutually independent.
Nevertheless, before any measurement, no ball is distinguished as
more likely than another to be the uniquely light ball. Weighing one
ball reveals whether that particular ball is light, but generally does
not determine the weights of the remaining balls. Thus, after
observing that one tested ball is not the special one, the remaining
candidates retain essentially symmetric status.

The single-violation regime considered in the present work is similar.
There are
\[
K=2^N-1
\]
candidate principal minors, and exactly one violating subset
\(W\). Observing that one queried principal minor is positive
typically removes only that queried candidate itself from
consideration, while the remaining candidates continue to possess
essentially symmetric status in the absence of additional structural
information.

From this viewpoint, the hidden violating subset retains essentially
uniform status among the remaining candidate principal minors, while
local sign observations provide only weak eliminative power. This
motivates the use of a uniform prior in the main text for general
matrix families in which principal minors exhibit weak eliminative
inferability.

The situation may differ for special matrix classes, such as
\(M\)-matrices, \(Z\)-matrices, totally positive matrices, or other
highly structured families, where algebraic constraints may induce
stronger dependencies and more informative eliminative relations among
principal minors. Thus, the uniform prior adopted in the present work
should be viewed as a baseline model for general matrix regimes,
rather than as a universal statement applying to all matrix classes.

\section{A \(6\times6\) Illustrative Example}
\label{app:matrixofsix}

In this appendix, we present a concrete example illustrating the
perturbation model introduced in the main text. The example exhibits a
\emph{single-violation regime}, in which a structured matrix admits
exactly one non-positive principal minor under a rank-one perturbation.

\paragraph{Base P-matrix.}
Consider the matrix
\[
M=
\begin{pmatrix}
2.3 & -4.8 & -1.9 & 0 & 0 & 1.3 \\
2.6 & 3.4 & 3.8 & 0 & 0 & 2.6 \\
-2.6 & 4.8 & 6.7 & -1.1 & 1.3 & -1.3 \\
1.3 & 2.4 & -3.8 & 5.4 & 0 & -1.3 \\
-2.6 & 4.8 & 0 & -1.1 & 4.9 & -1.3 \\
0 & -4.8 & -1.9 & 0 & -2.6 & 7.5
\end{pmatrix}.
\]

Numerical verification shows that all \(63\) nonempty principal minors
of \(M\) are positive. In particular, the smallest principal minor is
approximately \(0.272\), attained at the subset \(\{2,3,4\}\).
Thus, \(M\) is a P-matrix.

\paragraph{Perturbation.}
Let
\[
u=(0.25,\;0,\;0,\;0,\;2,\;0)^{\top},
\qquad
v=(-1.740695,\;0,\;0,\;0,\;-1.740695,\;0)^{\top}.
\]

\paragraph{Violation.}
For this choice of \(u\) and \(v\), numerical evaluation shows that
exactly one principal minor of \(A(u,v)\), defined
in~\eqref{eq:As}, is non-positive. Specifically, the principal
submatrix indexed by
\[
w^\star=\{1,5\}
\]
satisfies
\[
\det\bigl(A(u,v)_{w^\star}\bigr)\approx -0.001.
\]

All other nonempty principal minors of \(A(u,v)\) remain positive.
For comparison,
\[
\det\bigl(M_{w^\star}\bigr)=11.27,
\]
which is substantially larger than the minimal principal minor
\(0.272\) of the original P-matrix \(M\).

\begin{table}[h]
\centering
\footnotesize
\caption{\footnotesize Principal minors of neighboring subsets of the violating subset
\(\{1,5\}\).}
\vspace{0.15cm}
\label{tab:neighboring-minors}
\begin{tabular}{c c r @{\hspace{0.6cm}} c c r @{\hspace{0.6cm}} c c r}
\hline
No. & Subset & Minor ~~\vline&
No. & Subset & Minor ~~\vline&
No. & Subset & Minor \\
\hline
1  & \(\{1\}\)       & 1.865  ~~\vline &
6  & \(\{1,6\}\)     & 13.986 ~~\vline &
11 & \(\{5,6\}\)     & 7.260   \\

2  & \(\{1,2\}\)     & 18.820  ~~\vline&
7  & \(\{5\}\)       & 1.419  ~~\vline &
12 & \(\{1,2,5\}\)   & 12.270  \\

3  & \(\{1,3\}\)     & 7.554   ~~\vline&
8  & \(\{2,5\}\)     & 4.823   ~~\vline&
13 & \(\{1,3,5\}\)   & 8.006   \\

4  & \(\{1,4\}\)     & 10.070 ~~\vline &
9  & \(\{3,5\}\)     & 9.505   ~~\vline&
14 & \(\{1,4,5\}\)   & 0.617   \\

{\bf 5}  & \({\bf \{1,5\}}\)     & {\bf -0.001}  ~~\vline&
10 & \(\{4,5\}\)     & 7.660   ~~\vline&
15 & \(\{1,5,6\}\)   & 14.244  \\
\hline
\end{tabular}
\end{table}

\begin{figure}[H]
\centering
\includegraphics[scale=1.0]{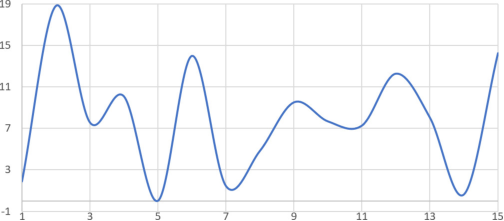}
\vspace{-0.3cm}
\caption{\label{fig:01} {\footnotesize Principal minors of neighboring subsets of the violating subset
\(\{1,5\}\) indexed according to Table~1.}}
\end{figure}

Table~\ref{tab:neighboring-minors} and Figure~\ref{fig:01} summarize
the principal minors associated with subsets neighboring the violating
subset \(\{1,5\}\). Although \(\{1,5\}\) is the unique violating
subset, most nearby subsets remain positive and several have
substantially larger determinant values. In particular, no clear
monotone trend toward the violation is visible among neighboring
subsets.

\paragraph{Remark.}
This example illustrates that a highly structured matrix, obtained
through a smooth rank-one perturbation of a P-matrix, can exhibit a
sparse and weakly expressed violation of the P-matrix property. Even
among neighboring subsets, the violating minor is not strongly
distinguished by nearby determinant values. Consequently, local
evaluations of principal minors provide only limited directional
information about the location of the hidden violation.

This behavior contrasts with many constraint-based search problems,
such as 3-SAT, where local constraints often provide informative
signals that guide inference toward a solution. In the present
setting, however, no obvious local pattern among nearby principal
minors clearly identifies the violating subset.

The construction is obtained numerically and is intended only as an
illustrative example demonstrating the existence of sparse-violation
regimes in which the hidden violating subset is not readily accessible
through local observations.

\section{Rank-One Construction of Sparse Violations in P-Matrices}
\label{app:construction}

We describe a procedure for constructing sparse-violation instances,
particularly in regimes where
\(
|\mathcal{V}(A(u,v))|=1.
\)
The key idea is to perturb a P-matrix through a rank-one update so
that one principal minor becomes non-positive while the others remain
positive.

For a P-matrix \(M\in\mathbb{R}^{n\times n}\), define
\[
f_M
:=
\min_{\alpha\subseteq[n],\,\alpha\neq\emptyset}
\det(M_\alpha),
\qquad
\alpha^\star
\in
\arg\min_{\alpha}
\det(M_\alpha).
\]
Thus, \(f_M\) denotes the smallest principal minor of \(M\), and
\(\alpha^\star\) is a corresponding minimizing subset.

\paragraph{Algorithm~1 (Rank-one construction).}
\begin{enumerate}

\item Compute \(f_M\) and a minimizing subset \(\alpha^\star\).
Choose a vector \(u\in\mathbb{R}^n\) satisfying
\[
u_i>0
\quad\text{for } i\in\alpha^\star,
\qquad
u_i=0
\quad\text{otherwise}.
\]

\item Define a vector \(\hat v\in\mathbb{R}^n\) supported on
\(\alpha^\star\), and choose its signs so that
\[
\hat v_{\alpha^\star}^{\top}
M_{\alpha^\star}^{-1}
u_{\alpha^\star}
<0.
\]

\item Consider the one-parameter family
\[
A(\lambda)
:=
M+\lambda u\hat v^{\top},
\qquad
\lambda>0.
\]

By the matrix determinant lemma,
\[
\det(A(\lambda)_{\alpha^\star})
=
\det(M_{\alpha^\star})
\left(
1+
\lambda\,
\hat v_{\alpha^\star}^{\top}
M_{\alpha^\star}^{-1}
u_{\alpha^\star}
\right).
\]

\item Choose \(\lambda>0\) such that
\[
\det(A(\lambda)_{\alpha^\star})
=
-\epsilon,
\]
where \(\epsilon>0\) is small.

\item If necessary, reduce \(\lambda\) slightly to ensure that there
exists a unique subset \(w^\star\) satisfying
\[
\det(A(\lambda)_{w^\star})<0,
\]
while all other principal minors remain positive.
Note that \(w^\star\) need not coincide with \(\alpha^\star\).

\item Set \(v:=\lambda\hat v\) and define
\[
A(u,v):=M+uv^{\top}.
\]

\end{enumerate}

Upon termination, the matrix \(A(u,v)\) satisfies
\[
\det(A(u,v)_{w^\star})<0,
\]
while all other principal minors remain positive, yielding a
single-violation instance.

\paragraph{Remarks.}
\begin{itemize}

\item In favorable regimes, or under sufficiently small perturbations,
the construction may yield
\(
|\mathcal{V}(A(u,v))|=1.
\)

\item Although the construction is initialized using a specific subset
\(\alpha^\star\), identifying the smallest principal minor itself
requires a global search over exponentially many subsets. Thus, the
construction does not assume that the minimizing subset can be located
efficiently from local information.

\item The rank-one perturbation is applied only after identifying a
globally minimal subset. From the information-accessibility
perspective, this highlights an important distinction: once a suitable
subset is known, constructing a sparse violation becomes relatively
straightforward, whereas locating such a subset may itself require
global information.

\item The magnitude of the violating principal minor can be made
arbitrarily small. Consequently, the resulting violation may remain
weakly expressed relative to neighboring positive minors, reducing the
local distinguishability of the hidden violating subset.

\item The example in \ref{app:matrixofsix} employs a different choice
of perturbation vectors \(u\) and \(v\), illustrating that
sparse-violation regimes can arise from a broader class of rank-one
perturbations.

\end{itemize}



\begin{thebibliography}{99}

\bibitem{Coxson1994}
G. E. Coxson,
\emph{The P-matrix problem is co-NP-complete},
Mathematical Programming, 64 (1994), 173--178.

\bibitem{PPSZ1998}
R.~Paturi, P.~Pudl{\'a}k, M.~E.~Saks, and F.~Zane,
\newblock An improved exponential-time algorithm for $k$-SAT,
\newblock in \emph{Proceedings of the 39th Annual IEEE Symposium 
on Foundations of Computer Science (FOCS)}, 1998, pp.~628--637.


\bibitem{Wei2026}
J.-Y.~Wei,
\emph{Intrinsic Information Flow in Structureless NP Search},
arXiv:2603.06315, 2026.

\bibitem{Wei2026-1}
J.-Y.~Wei,
\emph{Information Redistribution Under Reductions in NP Search},
arXiv:2605.20236, 2026.

\bibitem{Yao1977}
A.~C.-C.~Yao,
\emph{Probabilistic computations: Toward a unified measure of complexity},
in \emph{Proceedings of the 18th Annual IEEE Symposium on Foundations of Computer Science (FOCS)},
pp.~222--227, 1977.


\end{thebibliography}
\end{document}